\documentclass[aps,reprint,superscriptaddress,a4paper,nofootinbib,longbibliography]{revtex4-2}

\usepackage{amsthm}
\usepackage{amsmath}
\usepackage{amssymb}
\usepackage{bm}
\usepackage{braket}
\usepackage{float}
\usepackage{tcolorbox}
\usepackage{algorithm}
\usepackage{enumitem}
\usepackage{setspace}
\usepackage{hyperref}

\hypersetup{
    hidelinks,     
    colorlinks=true,   
    linkcolor=blue,    
    citecolor=blue,     
    urlcolor=magenta   
}
\setlist{noitemsep} 

\theoremstyle{plain}

\newtheorem{lemma}{Lemma}

\theoremstyle{definition}

\newcommand{\bars}{{\bar{\bm s}}}
\newcommand{\hP}{\widehat{P}}

\newcommand{\hPt}{\widehat{P^t}}
\newcommand{\hPts}{\widehat{P^t(\bm s)}}
\newcommand{\hatmu}{\hat\mu_{\bm x\cdot\bm s}}

\DeclareMathOperator{\Tr}{Tr}

\DeclareMathOperator{\var}{Var}

\begin{document} 

\title{Solving approximate hidden subgroup problems: 
quantum heuristics to detect \linebreak weak entanglement}

\author{Petar Simidzija}
\affiliation{Xanadu Quantum Technologies, Toronto, ON, M5G 2C8, Canada}

\author{Eugene Koskin}
\author{Elton Yechao Zhu}
\author{Michael Dascal}
\affiliation{Fidelity Center for Applied Technology, FMR LLC, Boston, MA, 02210, USA}

\author{Maria Schuld}
\affiliation{Xanadu Quantum Technologies, Toronto, ON, M5G 2C8, Canada}



\begin{abstract}
    How can we use a quantum computer to detect the entanglement structure of a quantum state? Bouland et al. (2024) recently provided an algorithm that, given multiple input copies of the state, finds the ``hidden cuts''---partitions into fully unentangled qubit registers. Their solution is based on turning cuts into a symmetry which can be detected with a Shor-type quantum algorithm for hidden subgroup problems, the \textit{hidden cut algorithm}. In this paper we derive heuristics that can find ``approximate symmetries'', or weakly entangled qubit registers, to unlock this powerful idea for a much broader range of problems. Our core contribution is a rigorous link between the output distribution of the hidden cut algorithm and the reward function that measures the quality of a cut. This implies that \textit{reducing} the number of state copies in the original hidden cut algorithm leads to measurement samples from which patterns of weak entanglement can be extracted. We believe that these insights are an important step in making quantum algorithms for hidden subgroup problems useful for applications beyond cryptography.
\end{abstract}

\maketitle

\section{Introduction}

The hidden subgroup problem (HSP) \cite{childs2010quantum} made quantum computing famous in the 1990s, when Shor linked an efficient quantum algorithm solving the problem to applications in cryptography. But finding less artificial use cases remains difficult up to this day \cite{babbush2025grand}, as it requires an oracle that loads a highly structured function ``hiding'' a subgroup of practical interest into a quantum state. Recently, Bouland, Giurgic{\u{a}}-Tiron and Wright \citep{bouland2025state} suggested a variation of the problem that could turn out to be an important step towards applications: instead of an oracle, the problem uses pairs of copies of an arbitrary $n$-qubit state $\ket{\psi}$ as input. A clever application of the standard HSP algorithm then discovers ``hidden cuts'' of the state, or a partition of the qubits into mutually non-entangled subsets. 

Of course, in realistic use cases we will hardly encounter perfect separability; a state of interest will more likely have a hierarchical pattern of partitions with increasingly weak entanglement, which we want to understand better, or perhaps variationally reinforce. For instance, we might be interested in characterizing or training the correlation structure of a generative quantum machine learning model where the qubits represent features \cite{liu2018differentiable, amin2018quantum, recio2025train}, or of a quantum simulation where the qubits model physical systems. But for weak entanglement, the hidden cut algorithm will produce the correct---and useless---answer: that there are no perfect cuts.

In this paper we take a deep look at the mechanism behind the quantum algorithm that solves the hidden subgroup problem to derive theoretically well-motivated heuristics that can deal with \textit{approximate} symmetries, and hence find weak entanglement structure. Our main insight is that the \textit{output distribution of the hidden cut algorithm is the Fourier transform of a reward function that measures bipartite entanglement}, which can be understood as the ``quality'' of a potential cut. The standard postprocessing strategy for HSP algorithms finds the maxima of this reward function. Although this is in principle quite difficult, the hidden cut algorithm uses a clever trick: increasing the number of input state pairs $t$ suppresses lower values of the reward function, which correspond to strongly entangled bipartitions. For large $t$, the reward function only has support in perfectly unentangled bipartitions or ``perfect cuts''. As we will see here, finding bipartitions into weakly entangled subsystems, or ``approximate cuts'', means finding \textit{secondary} maxima of this reward function.

The starting point for heuristics is to use enough input copy pairs to suppress most information in the reward function, but not the secondary maxima that signifies weak entanglement. The second ingredient of a heuristic is then to extract this information from the samples of the quantum computer by changing the postprocessing strategy of the original hidden cut algorithm, which is the standard postprocessing used in Shor's and other HSP algorithms. We derive two such strategies: The first one stops postprocessing early in order to find a solution other than the trivial one (i.e., that there are no perfect cuts), and the second one constructs an estimator of the quality of a cut which can be efficiently evaluated---and optimized---on a classical computer.

Although the use of these heuristics for applications will have to be established on a case-by-case basis, they have several attractive properties: The circuit only requires $2n$ Hadamard and $n$ controlled SWAP gates, as well as few input state copies. The number of input states constitutes a powerful hyperparameter that polynomially ``amplifies'' what we can extract about the entanglement structure (similar to cost amplification in Decoded Quantum Interferometry \cite{jordan2025optimization}). Lastly, the heuristics are theoretically well motivated, since we provide an analytic expression that relates the measurement distribution of the quantum algorithm to the (un-)entanglement structure of the input state--knowledge that can also inform approaches to variationally enforcing this structure. A clear limitation for application is the need to prepare simultaneous copies of the input state at all: while qubits are still a scarce resource, this is a serious demand. 

The next section provides a short technical summary of the main results. We then introduce and thoroughly analyze the original setting of the hidden cut algorithm (Section~\ref{sec:background}) and derive the link between output distribution and the reward function (Section~\ref{sec:result}). Section~\ref{sec:heuristics} describes the heuristics in more detail. We motivate the relevance of entanglement detection in the conclusion.

\section{Technical summary}

Our main technical result relates the output distribution of the hidden cut algorithm to what we called above the ``reward function'' of hidden cut finding. When run with a single input pair, $t=1$, this reward function is the purity 
\begin{align}
    P(\bm s) \equiv \Tr(\rho_{\bm s}^2)
\end{align} of the reduced state of a subsystem $\bm s$. We will define a subsystem as  a bitstring $\bm s\in \{0,1\}^n$ by marking included qubits with 1; for example, $\bm s=11000$ defines the subsystem of qubits $\{1, 2\}$ and $P(11000)$ is the purity of the reduced state $\rho_{12}$. Two complementary bitstrings, such as $\bm s=11000$ and $\bar{\bm s} = 00111$, then define the same bipartition (i.e., $\{1, 2\} \{3, 4, 5\}$), and have the same purity. 

The purity can be seen as a measure for the weakness of the entanglement between two qubit registers, or the ``strength of a cut''. A \textit{perfect cut} is then defined as a complementary bitstring pair $\bm s, \bar{\bm s}$ that has maximum purity: $P(\bm s) = P(\bar{\bm s}) = 1$. The purity of the all-zeros and all-ones bitstring (signifying no cuts) is always $1$, which we refer to as the \textit{trivial cut}. An \textit{approximate cut} is represented by a complementary bitstring pair where $P(\bm s) = P(\bar{\bm s}) \approx 1$. 

Our main technical contribution is to show that the hidden cut algorithm with $t$ pairs of input copies finds maxima of the purity function by sampling from an output distribution that is the \textit{Fourier transform} of its $t$'th power,
\begin{align}\label{eq:main_result}
    p_t(\bm x) = \frac{1}{2^n} \sum_s P^t(\bm s) (-1)^{\bm x\cdot \bm s},
\end{align}
where $\bm x \bm s$ is the dot product modulo $2$. Taking the purity to the $t$'th power has the effect of suppressing small purities (i.e., bipartitions of high entanglement), increasingly erasing their contribution to the output distribution of the quantum algorithm. Small values are suppressed much faster than large ones, but in the limit of many input state pairs $t$, only information on perfect cuts survives. The number $t$ of input state copies therefore acts as a \textit{hyperparameter that accentuates the reward function} (see Figure~\ref{fig:regimes}).

\begin{figure}
    \centering
    \includegraphics[width=\linewidth]{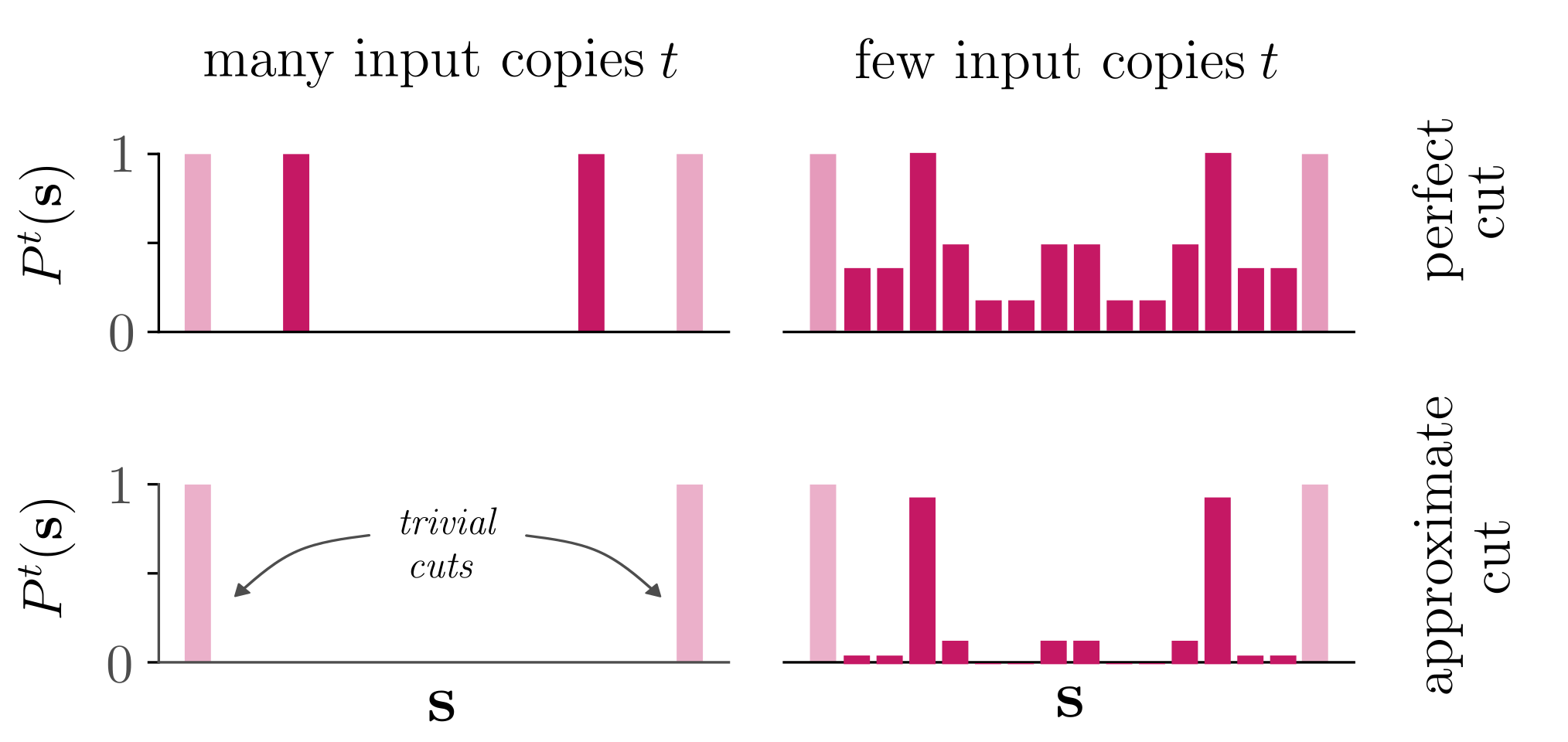}
    \caption{Finding weakly entangled bipartitions of qubits means finding subsystems $\bm s$ whose reduced state has a high purity $P(\bm s)$. We show here that the output distribution of the hidden cut algorithm run with $2t$ copies \cite{bouland2025state} is the Fourier transform of $P^t(\bm s)$. Using many input copies attenuates all purities smaller than $1$ and guarantees that postprocessing the measurement samples reveals $\bm s$ with $P(\bm s)=1$ (top left). However, it also suppresses information on ``approximate cuts'' or $P(\bm s)<1$ (bottom left). To build heuristics, we need to use a small number of input copies, which keeps information on secondary maxima, but still suppresses low values (bottom right). This example uses $t=300$ (left) and $t=5$ (right) for a product of two Haar random states, $\ket{\psi} = \ket{\psi_{01}}\ket{\psi_{23}}$, which we mixed in $0.1$ parts with a Haar random state to create a weak entanglement bipartition at $\bm s = 0011$, $\bar{\bm s}=1100$.  The code to reproduce all figures can be found here: \url{https://github.com/XanaduAI/approximate_hidden_subgroups}.}
    \label{fig:regimes}
\end{figure}

In the standard postprocessing procedure of the HSP algorithm, every sample from the output distribution $p_t(\bm x)$ eliminates all solution candidates $\bm s$ with $\bm x\cdot \bm s = 1$ (instead of $\bm x\cdot \bm s = 0$). As such, every previously unseen sample eliminates half of all solution candidates, which means that a few measurement shots suffice to reveal the set of correct solutions. When there are no perfect cuts, this elimination strategy will discard all but the trivial solution. Our core insight for building heuristics is that, for small $t$, Eq.~(\ref{eq:main_result}) implies that the probability of a sample is higher if, on average, it eliminates subsystems $\bm s, \bar{\bm s}$ with lower ``reward'' $P^t(\bm s)=P^t(\bar{\bm s})$. Hence, \textit{the samples we see likely eliminate poorer solution candidates during post-processing}.

We derive two concrete heuristics, both running with a carefully tuned, but small, number $t$ of input state pairs. The first heuristic simply stops eliminating solution candidates $\bm s$ before only the trivial solution remains. As the order of samples influences the result, this is repeated several times, revealing bipartitions that have a high likelihood of being weakly entangled. The second heuristic defines a function of how close $\bm x\cdot \bm s$ is to zero, averaged over all measurement samples $\bm x$. This function $\widehat{P^t}(\bm s)$ turns out to be an estimator of the reward function $P^t(\bm s)$, and allows us to probe any desired ``cut quality'' on a classical computer, which makes the entanglement structure of a quantum state accessible for classical optimization routines. 

\section{The hidden cut algorithm}
\label{sec:background}

The backbone of our approach to estimating weak entanglement structures is the hidden cut circuit.
This circuit was originally used in \cite{bouland2025state} to solve the \textit{hidden cut problem},
which can be viewed as a special case of the more general \textit{state hidden subgroup problem},
which itself is a modification of the traditional \textit{hidden subgroup problem} (i.e., \cite{childs2010quantum}).
Here we provide a brief recap of this background material, and refer to our PennyLane demo for a more didactic introduction \cite{PetarSimidzija2025}.
Readers already familiar with the material can skip this section.

\subsection{Hidden subgroup problem}

The hidden subgroup problem (HSP) is the following: 
given oracle access to a function $f(g)$ over a group $G$, 
such that a subgroup $H$ of $G$ is a symmetry of $f$, find $H$.
If all we are given is oracle access to $f$,
then in general we would need $\mathcal O(|G|)$ queries to determine $H$.
However, if instead of oracle access we assume the ability to prepare states $\ket{f(g)}$
with the property $\braket{f(g)|f(g')}=0$ if $f(g)\neq f(g')$,
then it is possible to solve the HSP for abelian $G$ in $\mathcal O(\log |G|)$ time using a quantum computer.

The most well-known problem that can be formulated as an HSP, and thus solved efficiently on a quantum computer,
is integer factorization~\cite{shor1994algorithms, shor1999polynomial}.
Unfortunately, besides factoring and a few other cryptography-motivated problems~\cite{hallgren2007polynomial, biasse2016efficient, Childs_2010, horan2018hidden},
it has been difficult to find other applications that benefit from a quantum computer's ability to efficiently solve abelian HSPs.
Perhaps one reason for this is that the above quantum formulation of the HSP is somewhat unnatural, 
since it assumes that there is a \textit{classical} function that hides the symmetry,
but requires us to be able to prepare \textit{quantum} states that encode this function.

\subsection{State hidden subgroup problem}

An arguably more natural quantum formulation of the HSP was recently proposed in~\cite{bouland2025state},
and called the \textit{state hidden subgroup problem} (StateHSP).
Unlike the the traditional quantum formulation of the HSP, StateHSP makes no reference to a classical function $f$, 
and instead simply assumes access to copies of a quantum state $\ket\Psi$ that has an unknown symmetry group $H$.
More precisely, it is assumed that $\ket\Psi$ is an element of a Hilbert space on which a unitary representation $U(g)$ of $G$ acts, 
such that $U(g)\ket\Psi$ and $\ket\Psi$ are equal if $g\in H$ and orthogonal otherwise.

\begin{figure}[t!]
    \centering
    \includegraphics[width=0.4\textwidth]{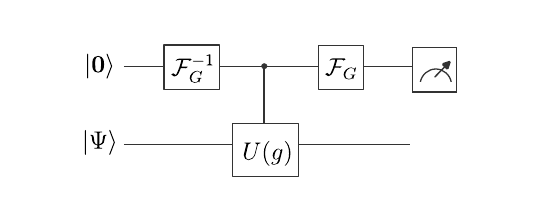}
    \caption{StateHSP circuit for group $G$.}
    \label{fig:StateHSP_circuit}
\end{figure}

The algorithm solving the StateHSP works by applying a quantum circuit to $\ket\Psi$ so that a few measurement samples are sufficient to find the hidden subgroup $H$.
The circuit, shown in Figure~\ref{fig:StateHSP_circuit}, has two registers.
The first, or \textit{group}, register
is a $|G|$-dimensional Hilbert space which we can think of as either encoding the group $G$
or the irreducible representations (irreps) of $|G|$.\footnote{
A standard result for finite groups is that $\sum_\rho(\text{dim }\rho)^2=|G|$,
where the sum is over irreps $\rho$, 
and hence the number of matrix elements in all the irreps is equal to the size of the group.
Therefore a vector space encoding $G$ is the same size, and thus isomorphic to, a vector space encoding the irreps of $G$.
The $G$-Fourier transform and its inverse map between the canonical bases of these vector spaces.
}
The group register is initialized to the state $\ket{\bm 0}$, 
corresponding to the trivial irrep.
The second, or \textit{state}, register is initialized to the state $\ket\Psi$.
The circuit first applies the inverse $G$-Fourier transform $\mathcal F_G^{-1}$ to the group register, 
after which this register is in the uniform superposition $\sqrt{|G|}^{-1}\sum_{g\in G}\ket g$.
Next, the unitary $U(g)$ is applied to the state register, 
with the value of $g$ conditioned on the state of the group register.
Finally, the $G$-Fourier transform $\mathcal F_G$ is applied to the group register, and this register is measured.

When $G$ is an abelian group, 
the measurements from the quantum circuit can always be efficiently postprocessed to determine the hidden subgroup $H$.
The postprocessing is the same as for the HSP (see e.g.~\cite{Childs_2010}).
The key fact is that for abelian groups $G$, measurements of the circuit are samples from the \textit{annihilator} $H^{\perp}$, 
defined as the set of homomorphisms from $G$ to $\mathbb C$ that map all elements in $H$ to 1.
Given the matrix $M$ constructed from samples of $H^\perp$, it is straightforward to determine $H$ as the nullspace of $M$.

\subsection{Hidden cut problem}\label{sec:background_hc}

\begin{figure}[t!]
    \centering
    \includegraphics[width=0.35\textwidth]{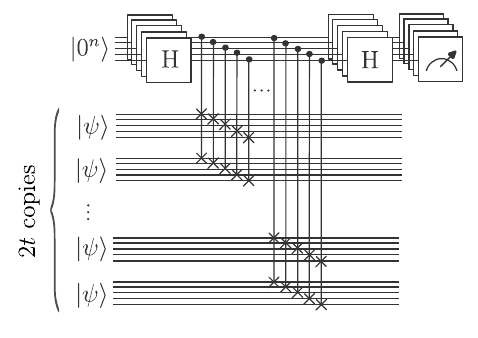}
    \caption{General $t$-copy hidden cut circuit.}
    \label{fig:hc_circuit_t}
\end{figure}

The hidden cut problem is the following: given an $n$-qubit state $\ket\psi$,
determine the partition of the qubits for which there is no entanglement between different components.
An efficient solution to this problem follows from recognizing that it can be viewed as an instance of StateHSP over the group $G=\mathbb Z_2^n$ \cite{bouland2025state}.

To see this reduction, consider the case where $\ket\psi=\ket{\psi_{\bm s^*}}\ket{\psi_{\bars^*}}$ consists of 
two disentangled subsystems $\bm s^*$ and $\bars^*$ (the case with more than two components is similar).
Here we are using our notation from earlier of denoting subsystems of $n$ qubits with $n$-bit strings.
Note that $n$-bit strings can also be used to canonically represent elements of $G=\mathbb Z_2^n$.
In this way $\bm s^*$ and $\bars^*$ can be viewed as generators of the $\mathbb Z_2\times \mathbb Z_2$ subgroup 
$H\equiv\{\bm 0,\bm s^*, \bars^*, \bm 1\}$ of $G$,
where $\bm 0$ is the all 0 bitstring (group identity) and $\bm 1$ is the all 1 bitstring.
Note that if we somehow found $H$, we would also find $\bm s^*$ and $\bars^*$, and would thereby solve the hidden cut problem. 

Indeed, it is possible to use the StateHSP algorithm to find $H$.
Recall that to do so we need to define a state $\ket\Psi$ and a representation $U(\bm s)$ of $G$
such that $U(\bm s)\ket\Psi$ and $\ket\Psi$ are equal if $\bm s\in H$, and orthogonal otherwise.
Let us define $\ket\Psi=(\ket\psi\ket\psi)^{\otimes t}\in (\mathcal H\otimes\mathcal H)^{\otimes t}$, 
where $\mathcal H$ is the Hilbert space of $\ket\psi$ (we will specify the exponent $t$ shortly).
Furthermore, define the unitary $U(\bm s)\equiv \text{SWAP}_{\bm s}^{\otimes t}$ on 
$(\mathcal H\otimes\mathcal H)^{\otimes t}$,
where we recall that $\text{SWAP}_{\bm s}$ acts on $\mathcal H\otimes \mathcal H$ by swapping the $\bm s$ qubits in the first $\mathcal H$ with the $\bm s$ qubits in the second $\mathcal H$.

Now consider $U(\bm s)\ket\Psi = \left(\text{SWAP}_{\bm s}\ket\psi\ket\psi\right)^t$.
We can consider each factor $\text{SWAP}_{\bm s}\ket\psi\ket\psi$ separately.
If $\bm s\in H$, there are four options:
\begin{itemize}
    \item $\bm s=\bm 0$: nothing happens, 
    \item $\bm s=\bm s^*$: the $\bm s^*$ components of the two $\ket\psi$s get swapped,
    \item $\bm s=\bars^*$: the $\bars^*$ components of the two $\ket\psi$s get swapped, and
    \item $\bm s=\bm 1$: the entire first $\ket\psi$ gets swapped with the entire second $\ket\psi$.
\end{itemize}
In all cases $U(\bm s)\ket\Psi$ is equal to $\ket\Psi$, as desired for $\bm s\in H$.
Meanwhile, if $\bm s\notin H$, we need $U(\bm s)\ket\Psi$ to be orthogonal to $\ket\Psi$. 
This is true only if $\left(\text{SWAP}_{\bm s}\ket\psi\ket\psi\right)^t$ is orthogonal to $(\ket\psi\ket\psi)^t$, 
which is not true in general.
But by taking $t$ to be large, any overlap between the two states will decay exponentially to zero,
and so $U(\bm s)\ket\Psi$ is effectively orthogonal to $\ket\Psi$ for large $t$.
Therefore the state $\ket\Psi$ and representation $U(g)$ satisfy the StateHSP requirements,
and thus the StateHSP circuit in Figure~\ref{fig:StateHSP_circuit}, 
which in this case reduces to the hidden cut circuit (Figure~\ref{fig:hc_circuit_t}),
can be used to solve the hidden cut problem.

\section{Main result: Relating subsystem purities to measurement samples}\label{sec:result}

We have seen how the hidden cut circuit can be used to solve the hidden cut problem.
We will now prove the main result used to derive heuristics in the next section, namely that the output distribution $p_{t}(\bm x)$ of the hidden cut circuit is 
the inverse $\mathbb Z_2^n$-Fourier transform of the subsystem purity function $P(\bm s)$, raised to the $t$'th power. To prove this statement, we will first derive another result, namely that $p_{t}$ is the $t$-fold convolution of $p_{1}$, the output distribution of the hidden cut circuit with $t=1$. 

Note that the results derived in this section can be generalized to StateHSP circuits over arbitrary abelian groups (which has also been discussed in \cite{hinsche2025abelian}),
which we summarise in Appendix~\ref{app:abelian_StateHSP}.

\subsection{Relating the $t$-copy measurement distribution to $t=1$ via convolution}

Let us define the measurement distribution of the $t$-copy hidden cut circuit as $p_t(\bm x)$,
which is a distribution over $n$-bit strings $\bm x$, corresponding to measurements of the circuit's group register.
We begin by proving that for arbitrary $t$, $p_t(\bm x)$ is equal to the $t$-fold $\mathbb Z_2^n$-convolution of $p_1(\bm x)$.

To prove this statement, consider the hidden cut circuit with parameter $t$, acting on $t$ copies of $\ket{\Psi_1}\equiv \ket\psi\ket\psi$ (Figure~\ref{fig:hc_circuit_t}).
The final pre-measurement state is
\begin{align}
    \frac{1}{\sqrt{2^n}}\sum_{\bm s\in \mathbb Z_2^n}
    \sum_{\bm x\in \mathbb Z_2^n}
    H_{\bm x\bm s}\ket{\bm x}
    \left(U(\bm s)\ket{\Psi_1}\right)^t,
\end{align}
where $U(\bm s)\equiv \text{SWAP}_{\bm s}$ is the unitary swapping subsystem $\bm s$ between the two copies of $\ket\psi$ in each of the $t$ pairs, 
and $H_{\bm x\bm s} \equiv \frac{1}{\sqrt{2^n}}(-1)^{\bm x\cdot \bm s}$ are the components of the $n$-th order Hadamard matrix,
with $\bm x\cdot \bm s$ the dot-product modulo 2.
Measuring the group register results in outcome $\bm x$ with probability
\begin{align}
    p_t(\bm x) = \frac{1}{2^n}\sum_{\bm s'} \sum_{\bm s''} H_{\bm x\bm s'} H_{\bm x\bm s''}
    \left(\bra{\Psi_1} U^\dagger(\bm s')U(\bm s'')\ket{\Psi_1}\right)^t.
\end{align}
We can simplify this expression by noting that
since $U$ is a unitary representation of $G=\mathbb Z_2^n$ we have that $U^\dagger(\bm s)=U^{-1}(\bm s)=U(\bm s^{-1})=U(\bm s)$.
Therefore 
\begin{align}
    p_t(\bm x) = 
    \frac{1}{2^n}\sum_{\bm s'} \sum_{\bm s''} 
    \frac{1}{\sqrt{2^n}}H_{\bm x(\bm s' \oplus \bm s'')} 
    \bra{\Psi_1} U(\bm s' \oplus \bm s'')\ket{\Psi_1},
\end{align}
where $\oplus$ denotes bitwise summation modulo 2.
Letting $\bm s\equiv \bm s' \oplus \bm s''$, the double sum simplifies to a single sum,
\begin{align}\label{eq:p_t_from_Psi_v0}
    p_t(\bm x) &= \frac{1}{\sqrt{2^n}} \sum_{\bm s} H_{\bm x\bm s}
    \left(\bra{\Psi_1} U(\bm s)\ket{\Psi_1}\right)^t.
\end{align}

Now let us define the (classical) Fourier and inverse Fourier transforms for a function $f$ over $G=\mathbb Z_2^n$ as
\begin{align}
    \mathcal F[f](\bm s) 
    & \equiv \sum_{\bm x} (-1)^{\bm x\cdot \bm s} f(\bm x) \label{eq:FT}\\
    & = \sqrt{2^n}\sum_{\bm x} H_{\bm s\bm x} f(\bm x),\\
    \mathcal F^{-1}[\hat f](\bm x) 
    & \equiv \frac{1}{2^n}\sum_{\bm s} (-1)^{\bm x\cdot \bm s} \hat f(\bm s) \label{eq:FT_inv}\\
    & = \frac{1}{\sqrt{2^n}}\sum_{\bm s} H_{\bm x\bm s} \hat f(\bm s).
\end{align}
Additionally, we define the convolution of two functions $f$ and $g$ over bitstrings as
\begin{align}
    (f\ast g)(\bm x)\equiv \sum_{\bm x'} f(\bm x')g(\bm x\oplus\bm x'),
\end{align}
With our prefactor conventions for $\mathcal F$ and $\mathcal F^{-1}$, the convolution theorem states that
\begin{align}
    f\ast g = \mathcal F^{-1}\big[\mathcal F[f]\mathcal F[g]\big].
\end{align}

With these definitions we can write Eq.~\eqref{eq:p_t_from_Psi_v0} for $p_t$ as
\begin{align}\label{eq:p_t_from_Psi}
    p_t =
    \mathcal F^{-1}[\left(\bra{\Psi_1} U(\cdot) \ket{\Psi_1}\right)^t].
\end{align}
Setting $t=1$ we find
\begin{align}\label{eq:p_1_FT}
    p_1 = \mathcal F^{-1}[\bra{\Psi_1} U(\cdot)\ket{\Psi_1}],
\end{align}
which we can invert to get
\begin{align}
    \bra{\Psi_1} U(\bm s) \ket{\Psi_1} =
    \mathcal F[p_1(\cdot)](\bm s).
\end{align}
Substituting this into the Eq.~\eqref{eq:p_t_from_Psi} gives $p_t$ in terms of $p_1$:
\begin{align}\label{eq:pt=Finv[F[p1^t]]}
    p_t = \mathcal F^{-1}
    \left[\mathcal F[p_1]^t\right],
\end{align}
or, by the convolution theorem,
\begin{align}\label{eq:pt=p1*t}
    p_t = p_1^{\ast t}
    \equiv 
    \underbrace{p_1 \ast \cdots \ast p_1}_{\text{$t$ times}}.
\end{align}
In words, $p_t$ is the $t$-fold $\mathbb Z_2^n$-convolution of $p_1$.

We will use this result in the next section. Before moving on, note that one useful implication of this result
is that it allows for the classical simulation of the hidden cut circuit for any $t$, 
as long as it is possible to compute the distribution for $t=1$.
Further, we can see that $p_t=p_1^{\ast t}$ implies that the support of $p_t$ is equal to the support of $p_1$, 
and the solution to the hidden cut problem is encoded precisely in this support---a fact we will investigate further below.

\subsection{Relating the measurement distribution to subsystem purities via Fourier transform}\label{sec:fourier_hidden}

We will now show that $p_{t}(\bm x)$ is the inverse $\mathbb Z_2^n$-Fourier transform of $P^t(\bm s)$, the $t$'th power of the subsystem purity function that can be understood as the reward function of finding strong cuts\footnote{The most well-known measure of entanglement is the von Neumann entropy $S(\bm s)\equiv-\Tr(\rho_{\bm s}\log\rho_{\bm s})$,
but because of the logarithm this can be difficult to compute.
The purity is a common alternative, which can be understood as a linearization of $1-S(\bm s)$. }.

To see this, recall Eq.~\eqref{eq:p_1_FT}:
\begin{align}\label{eq:p_1_FT_v2}
    p_1(\bm x) = \mathcal F^{-1}[\bra{\Psi_1} U(\cdot)\ket{\Psi_1}](\bm x).
\end{align}
This tells us that $p_1(\bm x)$ is the inverse Fourier transform of $\bra{\Psi_1} U(\bm s)\ket{\Psi_1}$, 
where $\ket{\Psi_1}\equiv \ket\psi\ket\psi$ and $U(\bm s)\equiv \text{SWAP}_{\bm s}$.
Defining $\rho_{\bm s} \equiv \Tr_{\bar{\bm s}}\ket\psi\bra\psi$,
the reduced state of $\ket{\Psi_1}$ on the qubits in $\bm s$ from the two copies of $\ket\psi$ is $\rho_{\bm s}\otimes\rho_{\bm s}$.
Therefore
\begin{align}
    \bra{\Psi_1} U(\bm s)\ket{\Psi_1} = 
    \Tr\left[
    \text{SWAP}
    \left(\rho_s\otimes\rho_s\right)
    \right],
\end{align}
where, because we have already reduced to the $\bm s$ qubits, 
$\text{SWAP}$ is now the regular swap operator, which swaps all qubits in the first tensor product factor with all qubits in the second tensor product factor.
Writing out the trace in an orthonormal basis, we find
\begin{align}
    \bra{\Psi_1} U(\bm s)\ket{\Psi_1}
    &= \sum_a\sum_b\bra a\bra b(\rho_{\bm s}\otimes\rho_{\bm s})\text{SWAP}\ket a\ket b\\
    &= \sum_a\sum_b\bra a\rho_{\bm s}\ket b\bra b\rho_{\bm s}\ket a\\
    &= \sum_a \bra a\rho_{\bm s}^2\ket a\\
    &= P(\bm s).\label{eq:swap=purity}
\end{align}
Therefore we conclude that 
\begin{align}
    p_1 =\mathcal F^{-1}[P],
\end{align}
or, in words, $p_1$ is the inverse $\mathbb Z_2^n$-Fourier transform of the subsystem purity function $P$.
Using our result \eqref{eq:pt=Finv[F[p1^t]]} that $p_t = \mathcal F^{-1}\left[\mathcal F[p_1]^t\right]$, we get
\begin{align}\label{eq:pt_2}
    p_t = \mathcal F^{-1}
    \left[P^t\right],
\end{align}
or, in words, $p_t$ is the inverse $\mathbb Z_2^n$-Fourier transform of $P^t$.

\subsection{Understanding the result better}\label{sec:structure}

Let us make the structure of $P^t(\bm s)$, the subsystem purity or ``reward'' function raised to the power of $t$, 
and $p_t(\bm x)$, the output distribution of the hidden cut circuit, more tangible. This will also help to illustrate the effect of the number of copies $t$, and to justify the heuristics suggested in the following sections.

As we have seen, these two functions over $n$-bit strings are related by a $\mathbb Z_2^n$-Fourier transform.
While the exact forms of the functions will depend on the state $\ket\psi$, 
we can nevertheless make some useful remarks that hold for general classes of states.
In particular we will find that the entanglement structure of $\ket\psi$ determines 
important global features of $P^t$ and $p_t$, namely their support and their uniformity (i.e.~entropy).
To that end, we will consider as examples two 4-qubit states $\ket\psi$ with different entanglement structures: 
\begin{enumerate}
    \item \textbf{Non-separable $\ket\psi = \ket{\psi_{0123}}$}: Haar-random 4-qubit state,
    \item \textbf{Separable $\ket\psi = \ket{\psi_{01}}\ket{\psi_{23}}$}: product of two Haar-random $2$-qubit states.
\end{enumerate}
Plots of $P^t(\bm s)$ and $p_t(\bm x)$, for $t\in\{1,3,100\}$, are shown in Figure~\ref{fig:psi_ent} for an instance of the non-separable state 
and Figure~\ref{fig:psi_sep} for an instance of the separable state.

\begin{figure}[t!]
    \centering
    \includegraphics[width=0.5\textwidth]{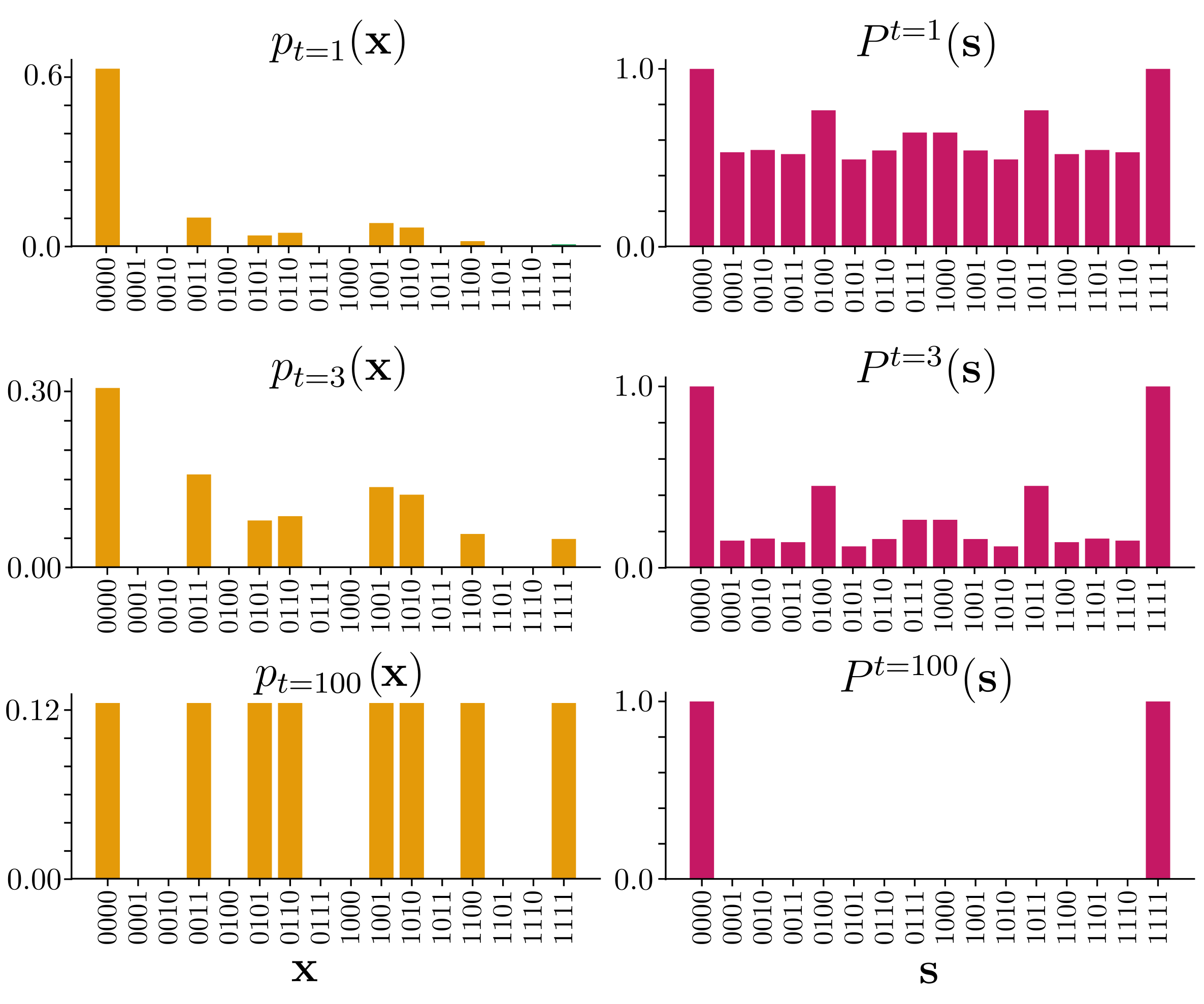}
    \caption{Non-separable state $\ket\psi=\ket{\psi_{0123}}$: Subsystem purities $P^t(\bm s)$ and hidden cut circuit output distribution $p_t(\bm x)$ for a Haar-random 4-qubit state, which are related by a Fourier transform. The output distribution of the hidden cut algorithm has support over bitstrings that are orthogonal to bitstrings for which $P(\bm s)=1$. Here, the support is over all bitstrings of even parity, which are only orthogonal to $\bm s = 0000$ and $\bm s = 1111$. Increasing $t$ in the algorithm by adding state copies accentuates these features: $p_t$ becomes a uniform distribution over the support of $p_1$, and $P^t$ only keeps support for subsystems of purity $1$. }
    \label{fig:psi_ent}
\end{figure}

First consider $t=1$.
For $\ket\psi=\ket{\psi_{0123}}$ the subsystem purity function $P(\bm s)$ is plotted in the top-left panel of Figure~\ref{fig:psi_ent}.
Because this state is non-separable we see that the purities of all non-trivial subsystems are strictly less than 1,
i.e.~the only pure subsystems are the empty subsystem $\bm s=0000$ (which is pure by definition) and the full system $\bm s = 1111$ (which is pure because $\ket\psi$ is pure).
Meanwhile for $\ket\psi=\ket{\psi_{01}}\ket{\psi_{23}}$, 
whose purity function is plotted in the top-right panel of Figure~\ref{fig:psi_sep},
we see that, in addition to these two trivial subsystems,
the non-trivial subsystems $\bm s = 0011$ and $\bm s = 1100$ also have purity equal to 1.

For both states the subsystems of purity 1 form a subgroup $H$ of the full group of $n$-bit strings $G=\mathbb Z_2^n$.
For the non-separable state this group is $H=\mathbb Z_2=\{0000,1111\}$,
while for the separable state it is $H=\mathbb Z_2\times \mathbb Z_2=\{0000,0011,1100,1111\}$.
In either case the generators of $H$ directly encode the separable components of the state, 
i.e.~give us the solution to the hidden cut problem.
For $\ket\psi=\ket{\psi_{0123}}$ the single generator is $1111$, corresponding to the fact that no cut separates the state,
while for $\ket\psi=\ket{\psi_{01}}\ket{\psi_{23}}$ the generators are $0011$ and $1100$, corresponding to the pure subsystems $\ket{\psi_{01}}$ and $\ket{\psi_{23}}$.
Furthermore, notice that in both cases $H$ is the  \textit{symmetry subgroup} of the corresponding purity function,
that is $P(\bm s \oplus \bm h)=P(\bm s)$ for any $\bm s\in G$ and $\bm h\in H$.
In other words, the subsystem purity function has a special property that general functions over bitstrings do not have:
its symmetry subgroup is formed by the bitstrings with function value equal to 1.

\begin{figure}[t!]
    \centering
    \includegraphics[width=0.5\textwidth]{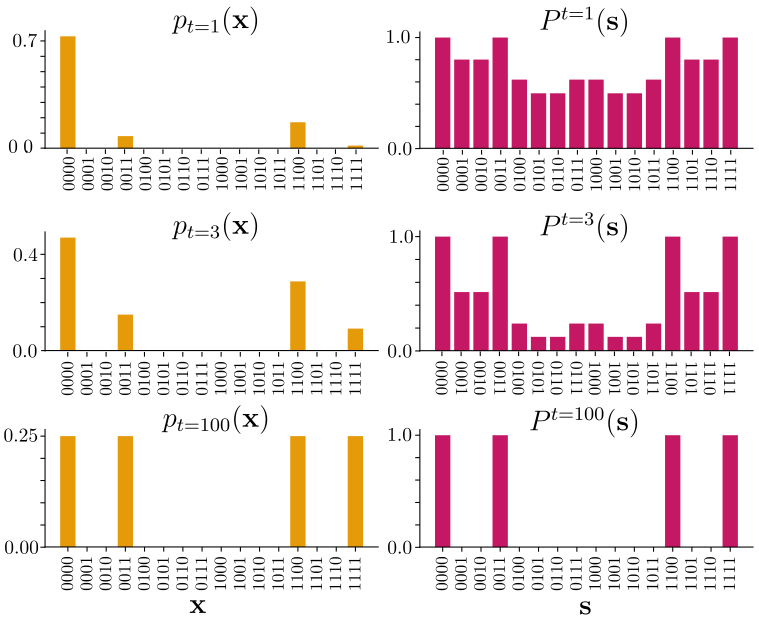}
    \caption{Separable state $\ket\psi=\ket{\psi_{01}}\ket{\psi_{23}}$: Subsystem purities $P^t(\bm s)$ and hidden cut circuit output distribution $p_t(\bm x)$ for a product of two Haar-random 2-qubit state. The output distribution of the hidden cut algorithm for this state has support on the bitstrings $0000$ and $1111$ as well as $0011$ and $1100$. The same bitstrings are orthogonal to these, and represent the subsystems with purity $1$, hence revealing the ``hidden cut''. }
    \label{fig:psi_sep}
\end{figure}

Given that the cut structure of the state is encoded in the symmetry group $H$ of $P(\bm s)$, 
it might be tempting to directly look for this structure by computing the purities of various subsystems and
thereby deducing $H$, for example with a certain kind of swap circuit.
Unfortunately this approach would be very inefficient.
First of all, the purities would have to be estimated with high enough precision to accurately deduce $H$,
which would potentially require many shots of the swap test circuit.
But more significantly, to determine the symmetry group of $P$ in this way for a state with $k$ separable components,
we would need to estimate $\mathcal O(2^{n-k})$ subsystem purities,
which is only feasible if there are many non-entangled components, i.e.~$k\approx n$.

Fortunately there is a much more efficient way to determine the symmetry subgroup of $P$.
The general insight is that if a function has a symmetry, then its Fourier modes must also have this symmetry.
Recalling that the inverse $\mathbb Z_2^n$-Fourier transform of $P$ is $p_1$,
and denoting the Fourier modes as $\chi_{\bm x}(\bm s)$,
the Fourier modes present in $P$ are those $\chi_{\bm x}$ for which $p_1(\bm x)>0$.
Now, the condition that a Fourier mode is symmetric under $H$ is that 
$\chi_{\bm x}(\bm s\oplus\bm h)=\chi_{\bm x}(\bm s)$ for all $\bm s\in G$ and $\bm h\in H$.
Because $\chi_{\bm x}$ is a homomorphism of $G$ into $\mathbb C$, 
this is equivalent to the statement that $\chi_{\bm x}(\bm h)=1$ for all $h\in H$,
i.e.~$\chi_{\bm x}$ is in the annihilator $H^\perp$.
For the group $\mathbb Z_2^n$ we have $\chi_{\bm x}(\bm h) = (-1)^{\bm x\cdot\bm h}$,
so $\chi_{\bm x}(\bm h)=1$ is equivalent to $\bm x\cdot \bm h = 0$.
Therefore the elements of $H$ are those bitstrings which are orthogonal to the support of $p_1(\bm x)$. 
Thus we can determine $H$ by sampling from $p_1(\bm x)$, organizing the measurement samples as rows in a \textit{measurement matrix} $M$,
and computing the nullspace of $M$.
Since every sample reduces the candidate for the nullspace by half, we are guaranteed that only a logarithmic number of samples (and in practice, very few) are sufficient to compute this nullspace, and hence solve the hidden cut problem.

For example, we see in the top-left plot of Figure~\ref{fig:psi_ent} that the support of $p_1$ for the non-separable state $\ket\psi=\ket{\psi_{0123}}$ consists of all even-parity bitstrings $\bm x$ (those with an even number of 1s).
There are only two $\bm s$ which are orthogonal to all even-parity bitstrings: $\bm s=0000$ and $\bm s=1111$.
Indeed we have already seen that these form the hidden subgroup $H$ for the non-separable state.
Now consider the separable state $\ket\psi=\ket{\psi_{01}}\ket{\psi_{23}}$, for which the distribution $p_1$ is shown in the top-left plot of Figure~\ref{fig:psi_sep}.
We see that the support of $p_1$ is sparser than in the case of the non-separable state: 
it is given by $H^\perp=\{0000, 0011,1100,1111\}$.
In this case $H=\{0000, 0011,1100,1111\}$ (same as $H^\perp$),
which is indeed the subgroup we expect for $\ket\psi=\ket{\psi_{01}}\ket{\psi_{23}}$.

We can summarize our conclusions for the 1-copy hidden cut circuit by saying that the following are all equivalent:
\begin{enumerate}
    \item \textbf{Hidden subgroup $H$}: $\bm s\in G$ for which $\bm s$ is an unentangled subset of $\ket\psi$,
    \item \textbf{Pure subsets}: $\bm s\in G$ for which $P(\bm s)=1$,
    \item \textbf{Symmetries of $P$}: $\bm s\in G$ for which $P(\bm s'\oplus \bm s)=P(\bm s')$ for all $\bm s'\in G$, and
    \item \textbf{Nullspace of $p_1$}: $\bm s\in G$ for which $\bm s\cdot \bm x=0$ for all $\bm x$ with $p_1(\bm x)>0$.
\end{enumerate}

Having understood the case $t=1$, it is straightforward to understand the structure of $P^t$ and $p_t$ for general $t$.
First, because $0\le P(\bm s)\le 1$, when we raise $P$ to higher powers, the values of $P(\bm s)$ which are less than 1 will decay to 0,
while the values which are equal to 1 will remain the same.
Since we already established that the $\bm s$ for which $P(\bm s)=1$ is precisely the hidden subgroup $H$, 
this means that in the limit $t\rightarrow\infty$ the spectrum of $P^t(\bm s)$ will consist solely of $H$.
We can demonstrate this in the left-side plots of Figs.~\ref{fig:psi_ent} and \ref{fig:psi_sep},
where we increase $t$ from 1 to 3 to 100.
At $t=100$ the support of $P^t$ effectively consists only of the hidden subgroup $H$ of the respective state $\ket\psi$.

Now consider $p_t$ as we increase $t$.
From the right-side plots of Figs.~\ref{fig:psi_ent} and \ref{fig:psi_sep} we notice two things.
First, the support of $p_t$ remains the same as we increase $t$.
And second, the values of the non-zero $p_t(\bm x)$ \textit{uniformize}, that is they approach the same value in the limit $t\rightarrow\infty$.
This value is of course determined by the normalization $\sum_{\bm x}p_t(\bm x)=1$.
Recalling that the support of $p_1$ is $H^\perp$, we conclude that in the limit of large $t$ the function $p_t$ is the uniform distribution on $H^\perp$.

We can explain this uniformization analytically in two ways.
The simplest way is to note that because $p_t$ is the inverse Fourier transform of $P^t$, 
the uniformization of $P^t$ to $H$ in the limit of large $t$ corresponds to the uniformization of $p_t$ to $H^\perp$. 
Alternatively, we can use the fact that $p_t$ is the $t$-fold self-convolution of the probability distribution $p_1$,
and make use of general results that show that repeated self-convolution of a probability distribution leads to a uniform distribution (over the supported values).\footnote{
See the textbook by Diaconis, Ref.~\cite{diaconis1988group}, for a precise statement and proof of this result, and for many interesting applications, e.g.~to card shuffling.}

One useful way to view this uniformization is through the lens of the information theory. 
As $p_t$ uniformizes with increasing $t$, it \textit{gains entropy} and \textit{loses information}.
In the $t\rightarrow\infty$ limit, $p_t$ becomes the maximum entropy distribution allowed by the symmetry requirement that its support must remain $H^\perp$. 
In this limit $p_t$ has lost as much information as is allowed by symmetry.
The only information remaining, which is protected by the symmetry, is the information about $H^\perp$ (and hence $H$).
This information is directly encoded in the support of $p_t$, which is the single degree of freedom defining a uniform distribution.
We can determine this degree of freedom by sampling $p_t$ for large enough $t$, as is done in the hidden cut algorithm.


\section{Heuristics to find approximate cuts}\label{sec:heuristics}

Armed with a deeper understanding, we will now turn to the task of deriving two heuristics. Both require running the hidden cut circuit with few input copies; enough to accentuate the features of the subsystem purity function $P(\bm s)$, but not eliminate its structure. Both heuristics are inspired by the idea of constructing an ``approximate nullspace'' when postprocessing the samples from the quantum computer. However, they differ in the final product: The first heuristic essentially stops the nullspace construction from Fourier samples before the trivial nullspace $\{00..0, 11..1\}$ is reached. The second one computes the ``average orthogonality'' of all samples to a possible solution $\bm s$, thereby constructing a classically efficient estimator of the reward function $P^t(\bm s)$, which we show to be statistically identical to estimating $P^t(\bm s)$ for each evaluation with a swap test on a quantum computer.

\subsection{Early stopping}\label{sec:heuristic1} 

\begin{figure}
    \centering
    \includegraphics[width=0.9\linewidth]{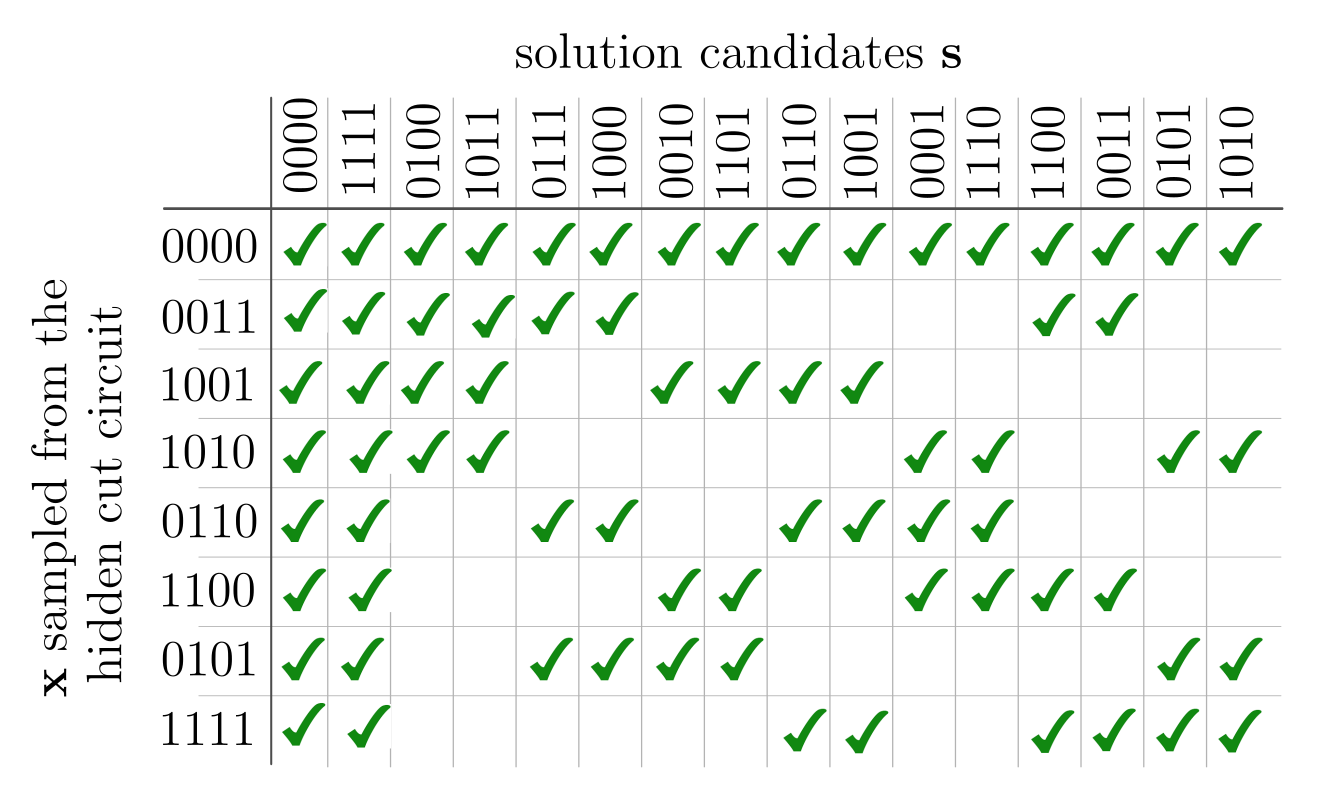}
    \caption{Postprocessing rules for possible samples drawn from the output distribution $p_t(\bm x)$ from Figure~\ref{fig:psi_ent}: A green box indicates that the sample $\bm x$ is orthogonal to a string $\bm s$ representing a subsystem, and hence admitted as a solution of the hidden cut algorithm. The columns are in descending order of $p_t(\bm x)$, while the rows are in descending order by $P(\bm s)$. It is clear from the plot that that samples with a high measurement probability admit subsystems with a high purity.}
    \label{fig:ticks}
\end{figure}

Recall that during postprocessing in the hidden cut algorithm, every nontrivial sample from $p_t(\bm x)$ is used to eliminate half of all possible subsystems $\bm s$, namely  those that fulfill $\bm x\cdot \bm s = 1$. For example, Figure~\ref{fig:ticks} shows the solution candidates admitted by any of the possible samples from the hidden cut circuit in the example shown in Figure~\ref{fig:psi_ent}.

Rewriting Eq.~(\ref{eq:main_result}) as
\begin{align} 
p_t(\bm x) &= \frac{1}{2^{n}} \left( \sum\limits_{\bm s|\bm x\cdot \bm s = 0}  P^t(\bm s) - \sum\limits_{\bm s|\bm x\cdot \bm s = 1}  P^t(\bm s) \right)
\end{align}
reveals that the probability of sampling $\bm x$ from the hidden cut circuit with $2t$ input copies is the average of $P^t(\bm s)$ over all $\bm s$ that $\bm x$ \textit{admits} as solutions, minus the average of $P^t(\bm s)$ over all $\bm s$ that $\bm x$ \textit{eliminates} during postprocessing. As a consequence, the probability $p_t(\bm x)$ will be higher for those $\bm x$ suggesting better cuts. To illustrate this, we show the rows and columns of Figure~\ref{fig:ticks} in the descending order of $p_t(\bm x)$ and $P(\bm s)$, respectively, and see that samples admitting high-purity $\bm s$ are more likely.

\begin{figure*}
    \centering
    \includegraphics[width=0.9\linewidth]{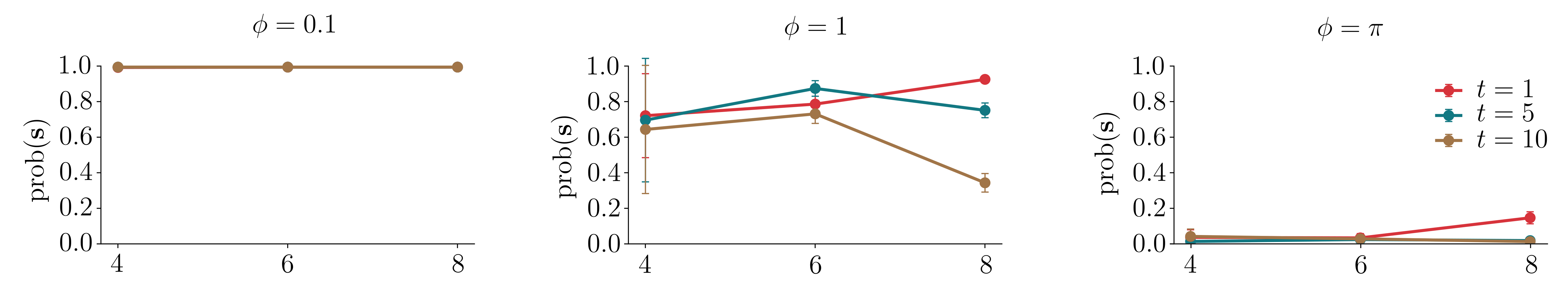}
    \caption{Probability of finding the planted cut. The state is composed of two $3$-qubit Haar random states that are entangled by a controlled $R_x(\phi)$ gate applied to the first qubit of the first register, and the last qubit of the last one, which creates a tunable cut at $\bm s = 000111, \bar{\bm s} = 111000$, while all other cuts are chance. We show mean and standard variance over $6$ different seeds to construct this state. The heuristic comfortably finds an approximate cut ($\phi=0.1$), and is even able to find a relatively weak cut ($\phi=1$) at small scales. Whether this is possible at larger scales critically depends on the entanglement \textit{within} the subsystems: if there are many good solutions, the heuristic will tend to find others than the best one. When the planted entanglement gets strong ($\phi=\pi$), the cut disappears and the heuristic finds other solutions. It is curious that the $t=1$ solution seems to improve with larger scales.} 
    \label{fig:prob_solution}
\end{figure*}

This observation suggests postprocessing the samples in an online fashion (i.e., one-by-one, possibly in order of their frequency) and stopping before the nullspace $\{\bm s| \bm x\cdot \bm s = 0 \;\; \forall \bm x \sim p_t(\bm x)  \}$ becomes trivial. Since the outcome is stochastic, this procedure should be repeated several times, each time yielding another solution candidate for the overall cut structure. The solution candidates are then aggregated into a final solution. As explained before, the number of copies $t$ serves as a hyperparameter that allows us to accentuate the purity function (similar to other quantum optimization strategies that apply a polynomial, such as \cite{jordan2025optimization}).

A concrete implementation of this strategy for a given number of copies $2t$ could look as follows:

\begin{enumerate}
    \item Run the hidden cut circuit several times to obtain $l$ samples $\bm x$. Make a list of unique samples, ordered by frequency. 
    \item Pick the most frequent $\bm x$ and eliminate all possible solution candidates $\bm s$ that are not orthogonal to $\bm x$. Discard this sample. 
    \item Repeat 2. and stop \textit{before} the solution candidates only contain the trivial all-zeros and all-ones solution, or when unique samples run out. \\
    \textit{Ex.: We get $\{0000, 0010, 1101, 1111\}$.}
    \item Extract the cut that this nullspace suggests.\\
    \textit{Ex.: The nullspace suggests that $\ket{\psi} \approx \ket{\psi_{013}}\ket{\psi_2}$.}
    \item Repeat Steps 1-3 $k$ times. \\
    \textit{Ex.: We might find the above solution $40\%$ of the time, and otherwise $\ket{\psi} \approx \ket{\psi_{012}}\ket{\psi_3}$.}
    \item Merge all cuts that were observed in more than $10\%$ of the runs. \\
    \textit{Ex.: The final solution is $\ket{\psi} \approx \ket{\psi_{01}} \ket{\psi_{3}}\ket{\psi_{2}}$}.
\end{enumerate}

Note that the last aggregation step could also result in hierarchical solutions, for example, that there is a stronger cut between qubits $013$ and $2$, than between qubits $012$ and $3$. 
Also note that the probability of drawing the uninformative all-zeros sample from the quantum algorithm,
\begin{align}
    p_t(00\dots 0) = \frac{1}{2^n}\sum_{\bm s} P^t(\bm s),
\end{align}
does not exponentially drown the signal with growing system sizes. Increasing the number of copies---and hence the sparsity of $P^t(\bm s)$---makes this value 
actually smaller.

Of course, the probability that the heuristic finds a certain cut structure depends on the details of the postprocessing strategy, and of the overall entanglement structure of the quantum state. We believe that it is difficult to derive a generic, but still meaningful analytic expression for the probability of a solution $\bm s$ being in the ``approximate nullspace'' after Steps 1-3 for all but simplified (and potentially misleading) cases. We therefore provide some small-scale numerical experiments here, but defer a thorough analysis of the scaling behaviour to application-specific follow-up work. We expect the heuristic to perform well at scale only when the cut structure is very prolific (i.e., the maxima in $P(\bm s)$ we aim to find are clearly higher than all other values, so that the polynomial suppression beats the growing scale). Variationally or otherwise amplifying this structure may broaden the cases in which entanglement can be detected. 

\begin{figure*}
    \centering
    \includegraphics[width=0.75\linewidth]{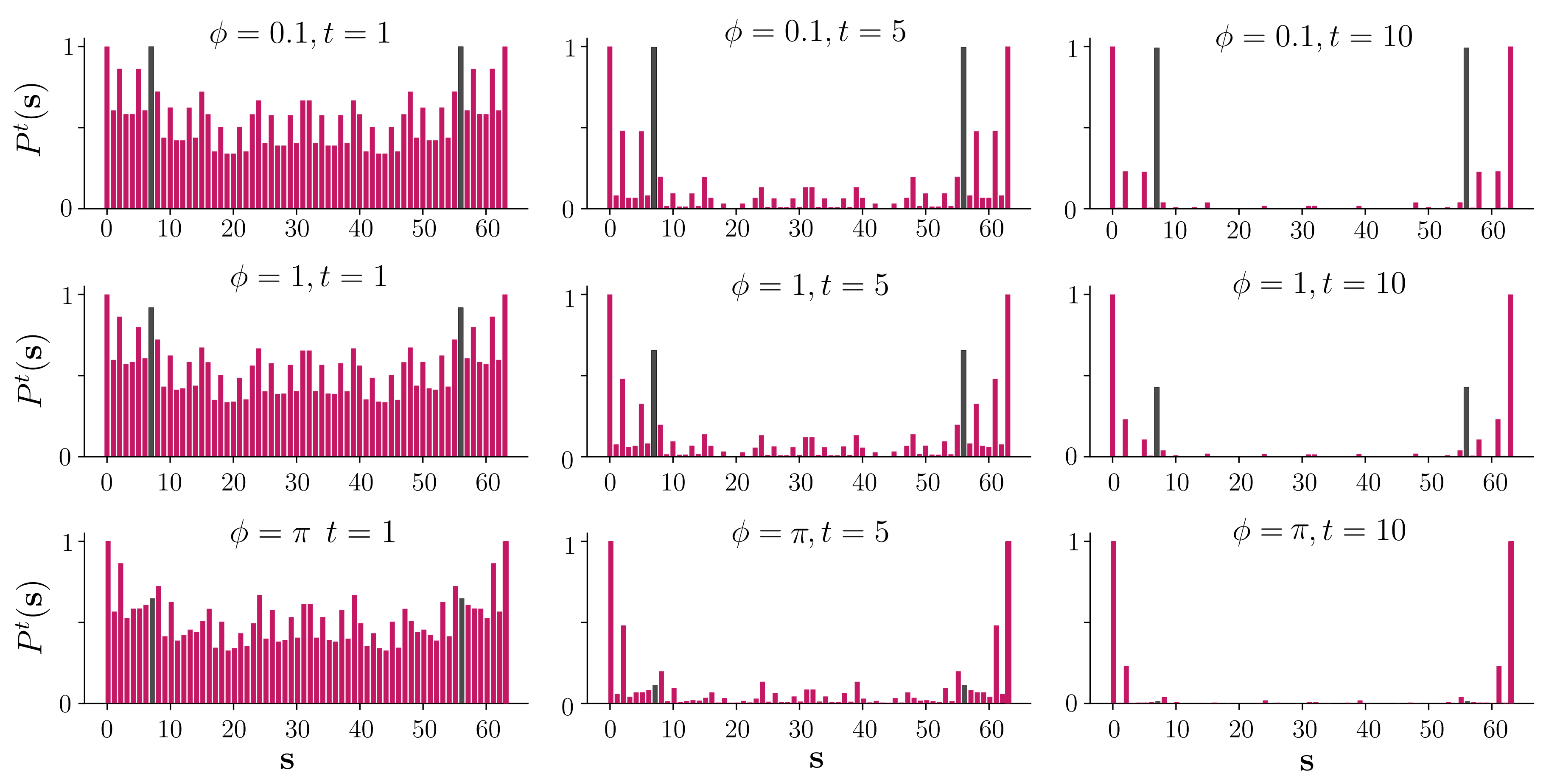}
    \caption{Subsystem purity function $P^t(\bm s)$ for different cut strengths $\phi$ and number of input state pair copies $t$ for one seed used in Figure~\ref{fig:prob_solution} at $n=6$ qubits. The subsystem bitstring pair corresponding to the planted cut is marked in gray, and its purity gets weaker compared to others with growing $\phi$. Using more input state pairs $t$ suppresses smaller purities and makes strong cuts stand out more.}
    \label{fig:purities}
\end{figure*}

\begin{figure*}
    \centering
    \includegraphics[width=0.75\linewidth]{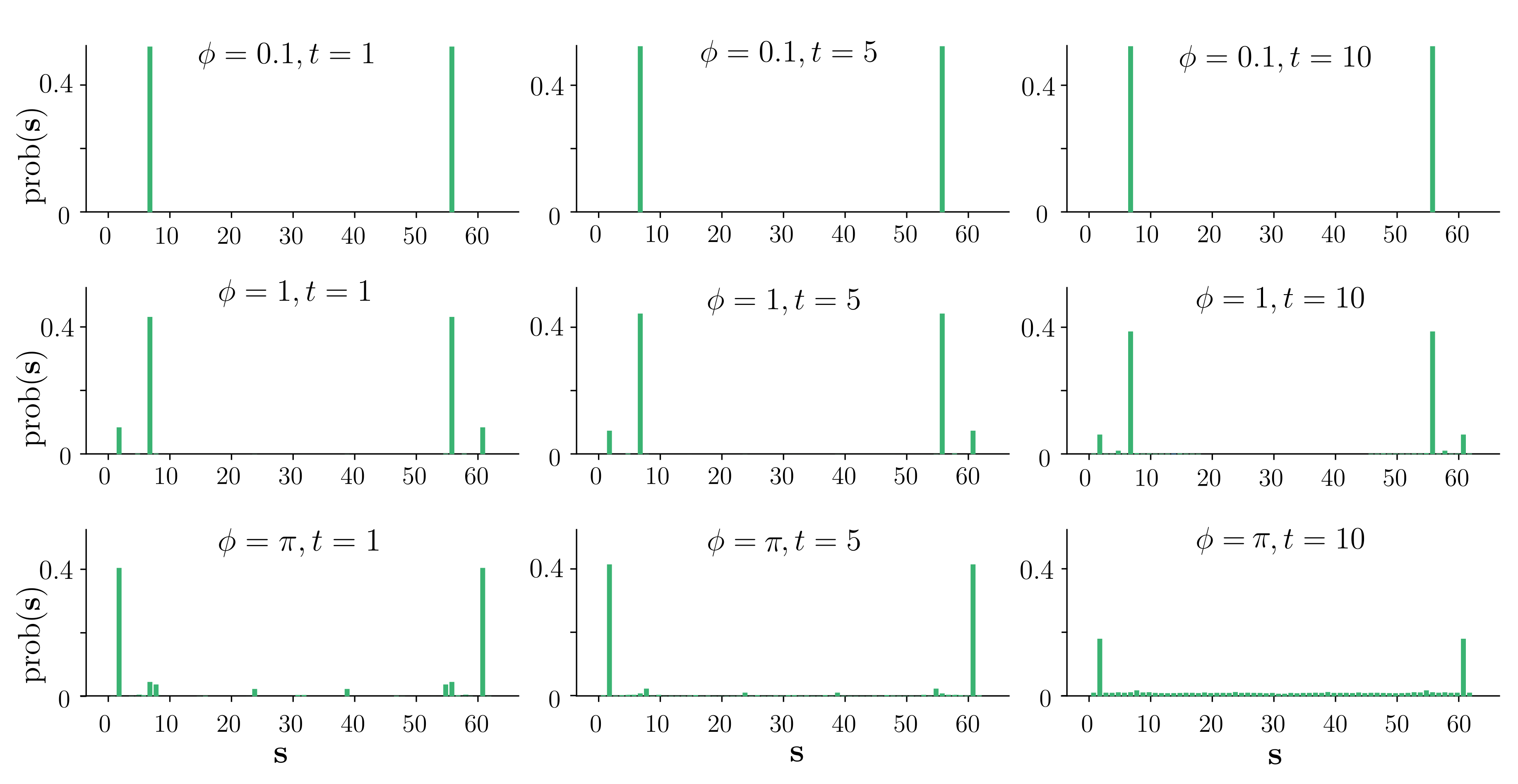}
    \caption{Probability $\text{prob}(\bm s)$ to find a subsystem $\bm s$ in the nullspace of Step 3 of the heuristic procedure, for the same settings as in Figure~\ref{fig:purities}. As the planted cut gets weaker ($\phi=1$) a secondary solution has a chance of getting found. When the cut disappears ($\phi=\pi$), that solution becomes the largest nontrivial maximum and also the most likely solution. The probabilities were empirically estimated to precision $\epsilon = 0.05$ using $2^6/\epsilon^2$ repetitions of Steps 1-3 of the heuristic.  }
    \label{fig:prob_s}
\end{figure*}

For small scales, in any case, the heuristic seems to work well. Figure~\ref{fig:prob_solution} shows the scaling behaviour up to $8$ qubits for a planted approximate cut where we vary the cut strength and number of copies.\footnote{Experiments are necessarily limited since each datapoint in the plots requires the evaluation of $2^n$ probabilities to precision $\epsilon$, and hence uses $m=\frac{2^n}{\epsilon^2}$ iterations of the hidden cut algorithm.} At these small scales it is obvious that the algorithm comfortably finds the strongest non-trivial cut, even if we entangle two separate subsystems with a controlled $R_x(\phi)$ rotation with angle $\phi=1$. As expected, for $\phi=\pi$---which allows to fully entangle the registers---the cut disappears.

To understand the behaviour better, Figures~\ref{fig:purities} and ~\ref{fig:prob_s} show the full subsystem purity (or reward) function, as well as the probability $\text{prob}(\bm s)$ of finding $\bm s$ in the nullspace when stopping the elimination procedure early (i.e., after Step 3 above), for $n=6$ qubits. For a cut strength of $\phi=1$, the first non-trivial maximum in the purity function is very close to the second maximum, and the algorithm actually starts to find the next most prominent cut at $\bm s = 000010, \bar{\bm s} = 111101$ with high probability---arguably something expected from the algorithm when this is a viable solution. As expected, once there is no prolifically strong cut any more ($\phi=\pi$), the heuristic cannot give us much information.

\subsection{Constructing an estimator of the subsystem purities for classical optimization}\label{sec:postprocessing}

As an alternative strategy, we will show how to use the samples of the hidden cut algorithm to construct an estimator $\hPt$ of $P^t$. This can be interesting in situations where we want to probe the properties of the reward or subsystem purity function on a classical computer---either to query $\hPt(\bm s)$ at desired values $\bm s$, or to use classical optimizers on $\hPt$ without having to make measurements at each step.

The key is the following lemma:

\begin{lemma}
    Let $p(\bm x)$ be a probability distribution over $\{0,1\}^n$
    and let $f(\bm s)\equiv \sum_{\bm x} (-1)^{\bm x\cdot \bm s} p(\bm x)$ be its Fourier transform.
    Then
    \begin{align}
        \mathbb E_{\bm x\sim p}\left[\bm x\cdot \bm s\right]
        =
        \frac{1}{2}\left(1 - f(\bm s)\right),
    \end{align}
    where we recall that $\cdot$ denotes the dot product modulo 2.
\end{lemma}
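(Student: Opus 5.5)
The plan is to use the pointwise identity that converts a bit into a sign, and then use linearity of expectation. Fix $\bm s$. For each $\bm x$, the quantity $\bm x\cdot\bm s$ is a dot product modulo 2, so it takes only the values $0$ and $1$. The key observation is that on the set $\{0,1\}$ we have
\begin{align}
    b = \frac{1}{2}\left(1-(-1)^{b}\right), \qquad b\in\{0,1\}.
\end{align}
This can be checked on both values: $b=0$ gives $\frac12(1-1)=0$, and $b=1$ gives $\frac12(1+1)=1$. Applying it with $b=\bm x\cdot\bm s$ gives, for every $\bm x$,
\begin{align}
    \bm x\cdot\bm s = \frac{1}{2}\left(1-(-1)^{\bm x\cdot\bm s}\right).
\end{align}
Here $\bm x\cdot\bm s$ is regarded as the integer $0$ or $1$ inside the expectation. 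This is how the lemma must be read, since otherwise the expectation of an element of $\mathbb Z_2$ would not be defined.

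Next I take the expectation over $\bm x\sim p$ on both sides and use linearity, writing $\mathbb E_{\bm x\sim p}[g(\bm x)]=\sum_{\bm x}p(\bm x)g(\bm x)$. The constant term contributes $\frac12\sum_{\bm x}p(\bm x)=\frac12$, because $p$ is normalized. The sign term contributes $-\frac12\sum_{\bm x}(-1)^{\bm x\cdot\bm s}p(\bm x)$, which is exactly $-\frac12 f(\bm s)$ by the definition of $f$ in the statement. This definition matches the transform $\mathcal F[p]$ of Eq.~\eqref{eq:FT}. Adding the two contributions gives $\frac12(1-f(\bm s))$, as claimed.

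No step here is a genuine obstacle. The only point that needs care is the bookkeeping between $\mathbb Z_2$ and the integers: the sign identity holds only because the dot product is reduced modulo 2 to $\{0,1\}$. Normalization of $p$ is the only property of $p$ the argument uses, and nonnegativity is not needed. In the paper's setting one then substitutes $p=p_t$. By Eq.~\eqref{eq:pt_2}, $\mathcal F[p_t]=P^t$, so $\mathbb E_{\bm x\sim p_t}[\bm x\cdot\bm s]=\frac12(1-P^t(\bm s))$. This is the identity on which the estimator $\hPt(\bm s)=1-2\cdot\frac{1}{l}\sum_i \bm x_i\cdot\bm s$ will be built.
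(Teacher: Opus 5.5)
Your proof is correct and follows the paper's argument exactly: the paper also uses the pointwise identity $(-1)^{\bm x\cdot\bm s}=1-2(\bm x\cdot\bm s)$, then sums against $p$ using only normalization to obtain $\frac{1}{2}(1-f(\bm s))$. Your explicit remark that $\bm x\cdot\bm s$ must be read as the integer $0$ or $1$ inside the expectation is a worthwhile clarification that the paper leaves implicit.
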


\begin{proof}
    With $(-1)^{\bm x\cdot \bm s} = 1 - 2(\bm x\cdot \bm s)$ we get:
    \begin{align}
        \mathbb E_{\bm x\sim p}\left[\bm x\cdot \bm s\right]
        &\equiv
        \sum_{\bm x} p(\bm x)(\bm x \cdot \bm s)\\
        &=
        \sum_{\bm x} p(\bm x)\frac{1}{2}\left[1 - (-1)^{\bm x\cdot \bm s}\right]\\
        &=
        \frac{1}{2}\left(1 - f(\bm s)\right).
    \end{align}
\end{proof}

Applying the lemma to the distribution $p_t(\bm x)$ with the Fourier transform $P^t(\bm s)$ gives
\begin{align}\label{eq:mean_x_dot_s}
    \mathbb E_{\bm x\sim p_t}\left[\bm x\cdot \bm s\right]
    =
    \frac{1}{2}\left(1 - P^t(\bm s)\right),
\end{align}
or, rearranging,
\begin{align}
    P^t(\bm s) = 1 - 2\mu_{\bm x\cdot\bm s},
\end{align}
where $\mu_{\bm x\cdot \bm s}\equiv \mathbb E_{\bm x\sim p_t}\left[\bm x\cdot \bm s\right]$ is the mean of the random variable $\bm x\cdot \bm s$.
This is a random variable because $\bm x$ is a random variable (with distribution $p_t(\bm x)$).
Moreover, because the dot-product is modulo 2, $\bm x\cdot \bm s$ can only take on the values 0 and 1:
it is a Bernoulli random variable.
Therefore, if we can estimate the mean of this Bernoulli random variable, we can use the above equation to estimate $P^t(\bm s)$.

Now, $m$ measurements from the $t$-copy hidden cut circuit give us $m$ $n$-bit samples $\{\bm x_1,\cdots\bm x_m\}$.
 Arrange these samples in a binary \textit{measurement matrix} $M$ of shape $m\times n$, 
in which the $k$-th row of $M$ is equal to $\bm x_k$.
Then we can estimate $\mu_{\bm x\cdot \bm s}$ as the sample mean $\hatmu$,
\begin{align}
    \hatmu\equiv \frac{1}{m} \sum_{k=1}^m\bm x_k\cdot\bm s
    =\frac{|M\bm s|}{m},
\end{align}
where $|M\bm s|$ is the Hamming weight (number of 1s) of the $m$-bit vector $M\bm s$.
This gives us an estimator for $P^t(\bm s)$, namely,
\begin{align}
    \hPt(\bm s)=1 - 2\hatmu,
\end{align}
or, explicitly in terms of $M$,
\begin{align}\label{eq:hat_P^t}
    \hPt(\bm s)=1 - \frac{2|M\bm s|}{m}.
\end{align}
The dependence of the right-hand side on $t$ is implicit through the distribution $p_t$ from which the samples are drawn.

\begin{figure}
    \centering
    \includegraphics[width=0.7\linewidth]{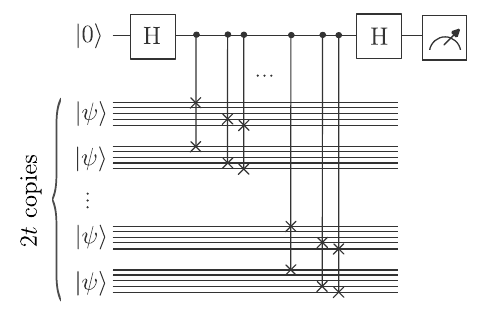}
    \caption{Swap test circuit to estimate $\widehat{P^t(\bm s)}$ for a fixed $\bm s$. Using samples from the hidden cut circuit, we derive an estimator of the same quality, but without the need to run a different circuit for each new input $\bm s$.}
    \label{fig:swap_test_circuit_t}
\end{figure}

Note that for any fixed value $\bm s$, we can estimate $P^t(\bm s)$ by running a Swap test circuit that swaps the qubits corresponding to 1's in $\bm s$, which is shown in Figure~\ref{fig:swap_test_circuit_t}. We can write such an estimator as $\widehat{P^t(\bm s)}$, and show in Appendix~\ref{app:statistical_analysis} that they are statistically identical. However, as shown in Figure~\ref{fig:tradeoff}, $\hPt(\bm s)$ requires only a constant number of shots to estimate $P^t$ for any values of $\bm s$.

The constructor of an estimator reveals an important link between the problem of finding the strongest approximate hidden cut and \textit{coding theory} (see, e.g.~\cite{macwilliams1977theory}). In this language, the space of all vectors $\bm s$ is the \textit{ambient space}, 
$M$ is the \textit{parity check matrix}, 
$M\bm s$ is the \textit{syndrome} of $\bm s$, 
$\ker(M)$ is the \textit{codespace}, 
and the vectors in $\ker(M)$ are the \textit{codewords}. 
Each row of the parity check matrix represents a parity condition that any valid codeword must satisfy. Finding approximate hidden cuts is precisely the \textit{minimum syndrome weight problem} in coding theory, 
which asks for the non-trivial ambient vectors $\bm s^*$ and $\bar{\bm s}^*$ that are as close to the codespace as possible. This perspective also illuminates the computational complexity of finding the maximum nontrivial cuts $\bm s^*$ and $\bar{\bm s}^*$.
There are two cases, with very different complexities. 
In the case where $M$ has a non-trivial kernel, 
$\bm s^*$ and $\bar{\bm s}^*$ are the non-trivial elements $\ker(M)$, 
and they can be efficiently found by solving $M\bm s=0$ via Gaussian elimination.
But in the general case, where there are no non-trivial solutions to $M\bm s=0$ and we do not assume any other particular structure on $M$, 
the minimum syndrome weight problem is NP-hard.

\begin{figure}[t]
    \centering
    \includegraphics[width=\linewidth]{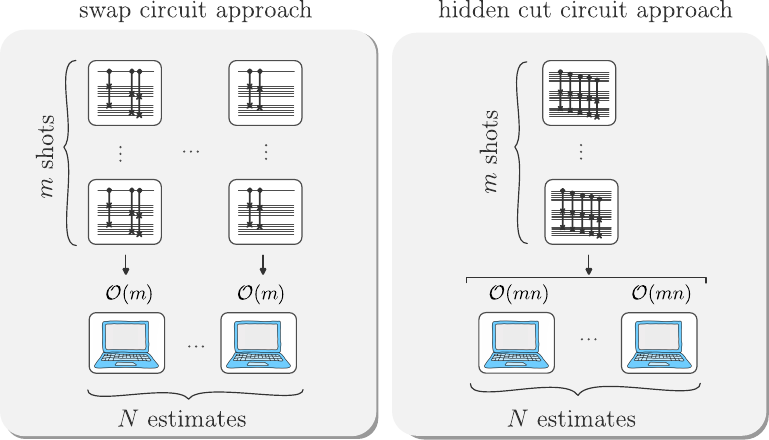}
    \caption{Schematic of the resource trade-off between constructing an estimator for the subsystem purity function $P^t(\bm s)$ using $m$ samples from the hidden cut circuit, or from the Swap test circuit shown in Figure~\ref{fig:swap_test_circuit_t}. The hidden cut circuit can be seen as a parallelized version of the swap circuit approach, where measurement samples contain information on the purity of all subsystems $\bm s$. This allows $m$ shots to be used to make any number of $N$ estimates, which can be useful when optimizing over the purity function.}
    \label{fig:tradeoff}
\end{figure}

We finish with one comment that might be interesting for applications. Writing out the matrix multiplication and Hamming norm explicitly, this is
\begin{align}
    \hPt(\bm s) = 
    1 - \frac{2}{m}
    \sum_{i=1}^m\left[\left(\sum_{j=1}^n M_{ij} s_j\right)\text{ mod }2\right].
\end{align}
Now define
\begin{align}
    W^{(1)} &\equiv \text{$m\times n$ matrix with entries $M_{ij}$,} \\
    W^{(2)} &\equiv \text{$1\times m$ matrix with constant entries $-2/m$,} \\
    b^{(1)} &\equiv \text{$m$-dim vector of all zeros,} \\
    b^{(2)} &\equiv 1,\\
    f(x) &= x\text{ mod }2.
\end{align}
Then
\begin{align}
    \hP^t(\bm s)
    = 
    b^{(2)} + W^{(2)}f\left(b^{(1)} + W^{(1)}\bm s\right).
\end{align}
Thus $\hPt(\bm s)$ can be thought of as a special two-layer neural network with input dimension $n$, hidden dimension $m$, output dimension $1$, and non-linear activation function $f$,
where the first layer has a binary weight matrix and no bias term, 
while the second layer has a constant weight matrix and a bias term of 1. The weights were not learned, but instead obtained in a theoretically principled way as measurement samples from the hidden cut quantum circuit. In contrast, there have been many previous works that use machine learning to study the correlation structure of many-body quantum systems, e.g.~\cite{torlai2018neural, carleo2017solving, schindler2017probing, berkovits2018extracting, chen2021detecting, lin2023quantifying, lee2024estimating, schmidt2025transfer, saleh2025predicting}. The estimator $\hP(\bm s)$ could be useful for machine learning approaches such as these, for example as an initialisation strategy for the weight matrices.

\section{Conclusion}

We want to conclude with a few comments on how the insights from this paper might be useful for quantum applications. 

Understanding the entanglement structure of many-body quantum systems is a central problem in many domains of physics \cite{amico2008}. In condensed matter physics, for example, entanglement provides a way to characterize and classify phases of quantum matter that are indistinguishable using traditional symmetry-based methods \cite{kitaev2006, levin2006, hui2008}; in quantum field theory, entanglement from the vacuum can be harvested by local detectors coupling to the field \cite{summers1985vacuum, salton2015acceleration, pozas-kerstjens2016entanglement, perche2024fully},
and can even be used as a resource to teleport energy between different regions of spacetime \cite{hotta2008a, hotta2008b} and in quantum gravity, subregion entanglement of the boundary of spacetime is intimately related to the geometry of the bulk of the spacetime \cite{ryu2006holographic, hubeny2007covariant, VanRaamsdonk2010building},
a fact that is central to proposed resolutions \cite{penington2020entanglement, almheiri2020replica} of the black hole information paradox \cite{hawking1974black}.  Quantum chemistry might use quantum states where qubits model electrons. Some of the above examples may allow for efficient simulation on quantum computers in future, and the hidden cut heuristics developed here might help us to understand and shape the entanglement structure of the emulated system.

While such physics-based use cases are fairly obvious, we might also encounter situations where qubits model classical variables, and detecting entanglement could help us understand their statistical independence. For example, consider a generative quantum machine learning model \cite{liu2018differentiable, amin2018quantum, recio2025train} $\ket{\psi_{\theta}}$ with trainable parameters $\theta$, that produces new data by sampling from the measurement distribution $p(\bm x) = |\langle \bm x |\psi_{\theta}\rangle|^2$. If the state is separable across a cut, $\ket{\psi_{\theta}} = \ket{\psi_{\bm s}} \ket{\psi_{\bar{\bm s}}}$, the generative model will factorize in the same manner, $p(\bm x) = p_{\bm s}(\bm x) p_{\bar{\bm s}}(\bm x)$. The signal extracted from the hidden cut algorithm could for example be used to encourage a disentangled (or ``statistically simple'') model during training. Another example are (hyper-)graphs encoded into quantum states, for example via graph states \cite{hein2006entanglement}. Here, the analysis of separable qubit registers could uncover certain types of community structure.

Lastly, we want to remark that the way that quantum algorithms for hidden subgroup problems work is rather special: instead of encoding the solution into a sample or an expectation value, they are genuinely hybrid algorithms where a few samples from the quantum computer contain the solution. This opens up a different playing field, where the outcome of a quantum algorithm is not a generation task (as, for example, in quantum optimization \cite{jordan2025optimization, farhi2014quantum}), but still only requires a few shots. Furthermore, quantum algorithms for hidden subgroup problems exploit group structure, which quantum computers are natively suited to (i.e., \cite{aaronson2022much}). Despite these qualities, interest in algorithms for algebraic problems has drastically faded since the 2000's, as perfect input structures are rarely encountered in the ``wild''. Our results suggest that a fruitful path to quantum applications could be to go back to these origins of quantum computing, but this time from a heuristic angle: one where perfect structure is replaced by approximate structure, the requirement to find the correct solution is replaced by trying to find good enough solutions, and provable worst-case guarantees by theoretically well-motivated heuristics.

\acknowledgments

We gratefully acknowledge David Wakeham as a collaborator during the early stages of this project. The Fidelity publishing approval number for this paper is 1253788.1.0.

\bibliography{references}

\appendix

\section{General abelian StateHSP}\label{app:abelian_StateHSP}

Recall that the hidden cut circuit is a specific instance of the general StateHSP circuit, 
with a particular choice of group $G$, namely $G=\mathbb Z_2^n$, 
and a particular choice of unitary representation of $G$, namely $U(\bm s) = \text{SWAP}_{\bm s}^{\otimes t}$.
Here we will generalize many of our key results obtained for the hidden cut circuit to the case of StateHSP circuits
with arbitrary abelian group $G$ and arbitrary unitary representation of $G$.
To more closely parallel the derivations for the hidden cut circuit, 
let us denote the unitary representation as $U(g)^{\otimes t}$, where $t$ is an arbitrary positive integer.
Note that there is no loss of generality with this class of representations, 
since we can always set $t=1$ and $U(g)$ to any desired unitary representation.

Let us now trace through the general abelian StateHSP circuit in Figure~\ref{fig:StateHSP_circuit}, just as we did for the hidden cut circuit.
The input state to the circuit is $\ket{g}\ket\Psi$, 
where $\ket{g}$ encodes the group elements and $\ket\Psi$ is the state whose hidden symmetry the StateHSP algorithm seeks to find.
Before the final Fourier transform the full state is
\begin{align}
    \frac{1}{\sqrt{|G|}}\sum_{g\in G} \ket{g}
    \left(U(g)\ket\Psi\right)^t.
\end{align}
After the final Fourier transform the state becomes
\begin{align}
    \frac{1}{\sqrt{|G|}}\sum_{g\in G}
    \sum_{k\in G}
    F_{kg}\ket{k}
    \left(U(g)\ket\Psi\right)^t,
\end{align}
where $F_{kg}$ are the components of the unitary matrix $F\in \mathbb C^{|G|\times|G|}$ that defines the Fourier transform over the group $G$. 
The fundamental theorem of finite abelian groups says that any abelian $G$ is isomorphic to $\mathbb Z_{N_1}\times\dots \times \mathbb Z_{N_m}$, where the $N_i$ are prime~\cite{dummit2004abstract}.
Under this isomorphism $g$ and $k$ can be indexed as $k=(k_1,\dots,k_m)$ and $g=(g_1,\dots,g_m)$ where $k_i, g_i\in \{0,\dots, N_i\}$, 
and $F_{kg}$ is given by
\begin{align}
    F_{kg} = \frac{1}{\sqrt{|G|}}
    e^{-2\pi ik_1g_1/{N_1}}
    \cdots
    e^{-2\pi ik_mg_m/{N_m}}.
\end{align}

Measuring the first (group) register results in outcome $k$ with probability
\begin{align}
    p_t(k) = \frac{1}{|G|}\sum_{g'} \sum_{g''} F_{kg'} F^*_{kg''}
    \left(\bra\Psi U^\dagger(g'')U(g')\ket\Psi\right)^t.
\end{align}
The definition of the matrix elements $F_{kg}$ implies $F_{kg'}F^*_{kg''} = F_{k(g'g''^{-1})}/\sqrt{|G|}$,
and the fact that $U$ is a unitary representation of an abelian group implies $U^\dagger(g'')U(g') = U(g'g''^{-1})$.
Therefore
\begin{align}
    p_t(k) = \frac{1}{|G|}\sum_{g'} \sum_{g''} \frac{1}{\sqrt{|G|}}F_{k(g'g''^{-1})} 
    \bra\Psi U(g'g''^{-1})\ket\Psi,
\end{align}
As a check notice that $\sum_k p_t(k) = 1$, as expected.
Letting $g\equiv g'g''^{-1}$, the double sum simplifies to a single sum:
\begin{align}\label{eq:p_t_from_Psi_general_G}
    p_t(k) &= \frac{1}{\sqrt{|G|}} \sum_{g} F_{kg}
    \left(\bra\Psi U(g)\ket\Psi\right)^t \\
    & \equiv
    \mathcal F^{-1}[\left(\bra\Psi U(\cdot) \ket\Psi\right)^t](k),
\end{align}
where we define the Fourier and inverse Fourier transforms over $G$ as
\begin{align}
    \mathcal F[f](g) & \equiv \sqrt{|G|}\sum_k F^*_{gk} f(k),\\
    \mathcal F^{-1}[\hat f](k) & \equiv \frac{1}{\sqrt{|G|}}\sum_g F_{kg} \hat f(g).
\end{align}
Letting $t=1$ gives the useful expression
\begin{align}\label{eq:p_1_FT_general_G}
    p_1 = \mathcal F^{-1}[\bra\Psi U(\cdot)\ket\Psi],
\end{align}
which is in fact a fully general expression since $U$ is an arbitrary representation.

We can relate $p_t$ to $p_1$ as we did in the hidden cut circuit.
First we invert \eqref{eq:p_1_FT_general_G} to get
\begin{align}
    \bra\Psi U(g) \ket\Psi =
    \mathcal F[p_1(\cdot)](g).
\end{align}
Substituting this into~\eqref{eq:p_t_from_Psi_general_G} gives $p_t$ in terms of $p_1$,
\begin{align}
    p_t = \mathcal F^{-1}
    \left[\mathcal F[p_1]^t\right].
\end{align}
Defining the convolution operator as
\begin{align}
    (f\ast h)(k)\equiv \sum_{k'}f(k')h(kk'^{-1}),
\end{align}
the convolution theorem thus gives
\begin{align}
    p_t = p_1^{\ast t}
    \equiv 
    \underbrace{p_1 \ast \cdots \ast p_1}_{\text{$t$ times}},
\end{align}
exactly the same as in the special case of the hidden cut circuit.

\section{Heuristic 2: Statistical analysis of the purity estimator}\label{app:statistical_analysis}

We provide some more details to compare $\hPt(s)$, the estimator of the subsystem purity function constructed from hidden cut circuit samples, and $\widehat{P^t(s)}$ the estimator of a value of this function constructed from swap test circuit samples.

\subsection{Equivalence of $\hPt(s)$ and $\widehat{P^t(s)}$}

In this section we compare the hidden-cut-based estimator $\hPt(\bm s)$ to the swap-test-based estimator $\hPts$.
We will find that there is no difference between them: they are statistically identical to one another.

To see this, recall that
\begin{align}
    \hPts &= 1 - 2\hat\mu_x,\\
    \hPt(\bm s) &= 1 - 2\hatmu, \label{eq:hPt(s)=1-2_hatmu}
\end{align}
where $\hat\mu_x$ and $\hatmu$ are sample means of the Bernoulli random variables $x\sim p_t(x)$, and $\bm x\cdot \bm s$ with $\bm x\sim p_t(\bm x)$.
Here $p_t(x)$ is the single-bit output distribution of the $t$-copy swap test circuit, 
and $p_t(\bm x)$ is the $n$-bit output distribution of the $t$-copy hidden cut circuit.

Now, the distribution of a Bernoulli variable $X$ is fully specified by its mean, 
which is equal to the probability that $X=1$.
The mean value of $x$ is
\begin{align}
    \mu_x=\frac{1}{2}\left(1 - P^t(\bm s)\right),
\end{align}
and from Eq.~\eqref{eq:mean_x_dot_s} the mean value of $\bm x\cdot \bm s$ is
\begin{align}
    \mu_{\bm x\cdot \bm s}=\frac{1}{2}\left(1 - P^t(\bm s)\right).
\end{align}
Thus the Bernoulli variables $x$ and $\bm x\cdot \bm s$ have the same mean, and hence they follow the same distribution: they are are statistically indistinguishable random variables.
This implies that their $m$-sample means $\hat\mu_x$ and $\hatmu$ are indistinguishable,
and finally that the estimators $\hPts=1 - 2\hatmu$ and $\hPt(\bm s) = 1 - 2\hat\mu_x$ are indistinguishable.
Therefore there is no statistical difference in estimating $P^t(\bm s)$ as $\hPt(\bm s)$ (hidden cut approach) 
versus $\hPts$ (swap test approach).

\subsection{Distribution of $\hPt(s)$ and $\widehat{P^t(s)}$}

We have just shown that the estimators $\hPts$ and $\hPt(\bm s)$ follow the same distribution. What is this distribution?

Let us work with the estimator $\hPt(\bm s)$, since the distribution for $\hPts$ is the same.
Recall that $\hPt(\bm s)=1-2\hatmu$, 
where $\hatmu$ is the $m$-sample mean of the Bernoulli random variable $\bm x\cdot \bm s$.
Because $m\hatmu$ is a sum of $m$ i.i.d.~samples of a Bernoulli variable, by definition it follows a Binomial distribution,
\begin{align}
    m\hatmu &\sim \text{Binomial}\left(m, \mu_{\bm x\cdot\bm s}\right),\\
\end{align}
where $\mu_{\bm x\cdot\bm s} = \frac{1}{2}\left(1 - P^t(\bm s)\right)$ is the true mean of $\bm x\cdot \bm s$.
Therefore $\hPt(\bm s)$ follows an \textit{affinely transformed Binomial distribution}, 
that is a Binomial distribution that is scaled and shifted by a constant. 
More precisely, the random variable $\hPt(\bm s)$ takes on the values $1-\frac{2k}{m}$, where $k\in \{0,1,\cdots,m\}$,
with probabilities
\begin{align}
    \text{Prob}\left(\hPt(\bm s)=1-\frac{2k}{m}\right)
    &=
    \binom{m}{k}
    \mu_{\bm x\cdot\bm s}^k (1-\mu_{\bm x\cdot \bm s})^{m-k},
\end{align}
where $k\in\{0,1,\dots,m\}$.
To get samples of $\hPt(\bm s)$ we:
\begin{enumerate}
    \item Sample $m\hatmu\sim\text{Binomial}\left(m, \mu_{\bm x\cdot\bm s}\right)$,
    \item Multiply by $2/m$, and
    \item Subtract the result from 1.
\end{enumerate}
Thus we have fully specified the distribution of $\hPt(\bm s)$, and therefore also $\hPts$.
\subsection{Bias, variance, and the signal-to-noise ratio}

To get some intuition about the distribution of $\hPt(\bm s)$, we can compute its mean and variance.
The mean is
\begin{align}
    \mathbb E\left[\hPt(\bm s)\right] 
    = 1 - 2\mathbb E[\hatmu]
    = 1 - 2\mu_{\bm x\cdot\bm s} 
    = P^t(\bm s),
\end{align}
which shows that $\hPt(\bm s)$ is an unbiased estimator of $P^t(\bm s)$,
and the variance is
\begin{align}\label{eq:var_hat_P^t}
    \var\left(\hPt(\bm s)\right)
    &= 4\var\left(\hatmu\right)\\
    &= \frac{4\mu_{\bm x\cdot\bm s}\left(1 - \mu_{\bm x\cdot\bm s}\right)}{m^2}
    &= \frac{1}{m^2}\left(1 - P^{2t}(\bm s)\right).
\end{align}

Now consider the \textit{signal-to-noise} ratio
\begin{align}
    \text{SNR}\left(\hPt(\bm s)\right)
    \equiv
    \frac{\mathbb E[\hPt(\bm s)]}{\sqrt{\var(\hPt(\bm s))}}
    =
    m \frac{P^t(s)}{\sqrt{1 - P^{2t}(s)}}.
\end{align}
In general, an estimator is good when its SNR is much bigger than 1, and poor when its SNR is much less than 1.
We see that, as is often the case, increasing the number of samples $m$ linearly increases the SNR.
Hence we can always get a good estimator by taking a large enough $m$.
We also note that $\text{SNR}\rightarrow0$ when $P(s)\rightarrow 0$, and
$\text{SNR}\rightarrow\infty$ when $P(s)\rightarrow 1$,
so we have a good SNR for high purity subsystems and a bad SNR for low purity subsystems.
This is potentially concerning, because by Page's theorem the purity of a typical subsystem of size $k$ of a Haar random $n$-qubit state is $\mathcal O(2^{-k})$ \cite{page1993average}.
Thus to precisely estimate the purity of such a subsystem, we would need $m=\mathcal O(2^{k})$.
However, many states of interest are not Haar random, 
and many subsystems of interest have purities that are not exponentially small.
In such cases we would be able to obtain precise purity estimates with more reasonable values of $m$.

\end{document}